\title{A 1.5-approximation algorithm for activating \ \ \ \ \ \ \ 2 disjoint {\em st}-paths}
\titlerunning{A 1.5-pproximation algorithms for activating 2 disjoint {\em st}-paths}
\author{Zeev Nutov}{The Open University of Israel}{nutov@openu.ac.il}
{https://orcid.org/0000-0002-6629-3243}{}
\author{Dawod Kahba}{The Open University of Israel}{kabhad82@gmail.com}{}{}
\authorrunning{Zeev Nutov and Dawod Kahba}
\begin{document}

\maketitle
\newcommand {\ignore} [1] {}

\newcommand{\sem}    {\setminus}
\newcommand{\subs}   {\subseteq}
\newcommand{\empt}   {\emptyset}

\newcommand{\f}   {\frac}

\def\al   {\alpha}
\def\be  {\beta}

\def\FF {{\cal F}}
\def\II    {{\cal I}}

\def\atdpa   {{\sc Activation $2$-DP Augmentation}}

\keywords{disjoint $st$-paths, activation problem, minimum power}

\begin{abstract}
In the {\sc Activation $k$ Disjoint $st$-Paths} ({\sc Activation $k$-DP}) problem  we are given a graph $G=(V,E)$ 
with activation costs $\{c_{uv}^u,c_{uv}^v\}$ for every edge $uv \in E$, a source-sink pair $s,t \in V$, and an integer $k$. 
The goal is to compute an edge set $F \subs E$  of $k$ internally node disjoint $st$-paths of minimum activation cost
$\displaystyle \sum_{v \in V}\max_{uv \in E}c_{uv}^v$.
The problem admits an easy $2$-approximation algorithm. 
Alqahtani \& Erlebach \cite{AE} claimed that {\sc Activation $2$-DP} admits a $1.5$-approximation algorithm.
The proof of \cite{AE} has an error, and we will show that the approximation ratio of their algorithm is at least $2$.
We will then give a different algorithm with approximation ratio $1.5$. 
\end{abstract}

\section{Introduction} \label{s:intro}

In network design problems one seeks a cheap subgraph that satisfies a prescribed property.
A traditional setting is when each edge or node has a cost, and we want to minimize the cost of the subgraph.  
This setting does not capture many wireless networks scenarios, where a communication between two nodes 
depends on our ''investment'' in these nodes -- like transmission energy and different types of equipment,
and the cost incurred is a sum of these ``investments''.
This motivates the type of problems we study here.

More formally, in {\bf activation network design problems} we are given 
an undirected (multi-)graph $G=(V,E)$ where every edge $e=uv \in E$ has two (non-negative) {\bf activation costs} $\{c_e^u,c_e^v\}$;
here $e=uv \in E$ means that the edge $e$ has ends $u,v$ and belongs to $E$. 
An edge $e=uv \in E$ is {\bf activated by a level assignment} $\{l_v: v \in V\}$ to the nodes if $l_u \geq c_e^u$ and $l_v \geq c_e^v$. 
The goal is to find a level assignment of minimum value $l(V)=\sum_{v \in V} l_v$, such that 
the activated edge set $F=\{e=uv \in E:c_e^u \leq l_u, c_e^v \leq l_v\}$ satisfies a prescribed property.
Equivalently, the minimum value level assignment that activates an edge set $F \subs E$ is given by $\ell_F(v)=\max \{c_e^v: e \in \delta_F(v)\}$;
here $\delta_F(v)$ denotes the set of edges in $F$ incident to $v$, and a maximum taken over an empty set is assumed to be zero.
We seek an edge set $F \subs E$ that satisfies the given property and minimizes $\ell_F(V)=\sum_{v \in V} \ell_F(v)$.
Note that while we use $l_v$ to denote a level assignment to a node $v$, 
we use a slightly different notation $\ell_F(v)$ for the function that evaluates the optimal 
assignment that activates a given edge set $F$. 

Two types of activation costs were extensively studied in the literature, see a survey \cite{N-sur}.
\begin{itemize}
\item
{\bf Node weights}.  For all $v \in V$, $c_e^v$ are identical for all edges incident to $v$. 
This is equivalent to having node weights $w_v$ for all $v \in V$. The goal is to find a node subset
$V' \subs V$ of minimum total weight $w(V') = \sum_{v \in V'}w_v$ such that the subgraph induced by
$V'$ satisfies the given property.
\item
{\bf Power costs}: For all $e = uv \in E$, $c_e^u=c_e^v$.
This is equivalent to having ``power costs'' $c_e = c_e^u=c_e^v$ for all $e = uv \in E$. 
The goal is to find an edge subset $F \subs E$ of minimum total power 
$\sum_{v \in V} \max\{c_e:e \in \delta_F(v)\}$ that satisfies the given property.
\end{itemize}

Node weighted problems include many fundamental problems such as 
{\sc Set Cover}, {\sc Node-Weighted Steiner Tree}, and {\sc Connected Dominating Set} c.f. \cite{S,KR,GH}.
Min-power problems were studied already in the 90's, c.f. \cite{SRS,WNE,RM,KKKP}, followed by many more.
They were also widely studied in directed graphs, 
usually under the assumption that to activate an edge one needs to assign power only to its tail, 
while heads are assigned power zero, c.f. \cite{KKKP,SM,N-powcov,HKMN,N-sur}. 
The undirected case has an additional requirement - we want the network to be bidirected,
to allow a bidirectional communication.
The general activation setting was first suggested by Panigrahi \cite{P} in 2011. 
Here we use a simpler but less general setting suggested in \cite{KNS}, 
which is equivalent to that of Panigrahi \cite{P} for problems 
in which inclusion minimal feasible solutions have no parallel edges.

In the traditional edge-costs scenario, 
a fundamental problem in network design is the {\sc Shortest $st$-Path} problem.
A natural generalization and the simplest high connectivity network design problem 
is finding a set of $k$ disjoint $st$-paths of minimum edge cost. 
Here the paths may be edge disjoint -- the {\sc $k$ Edge Disjoint $st$-Paths} problem, 
or internally (node) disjoint -- the {\sc $k$ Disjoint $st$-Paths}  problem.
Both problems can be reduced to the 
{\sc Min-Cost $k$-Flow} problem, which has a polynomial time algorithm.  

Similarly, one of the most fundamental problems in the activation setting is the {\sc Activation $st$-Path} problem. 
For the min-power version, a linear time reduction  to the ordinary {\sc Shortest $st$-Path} problem is given by Althaus et al. \cite{ACMP}. 
Lando and Nutov \cite{LN} suggested a more general (but less efficient) ''levels reduction'' 
that converts several power problems into problems with node costs; this method extends also to the activation setting, see \cite{N-sur}. 
A fundamental generalization is activating a set of $k$ internally disjoint or edge disjoint $st$-paths. 
Formally, the internally disjoint $st$-paths version is as follows. 

\begin{center} \fbox{\begin{minipage}{0.98\textwidth} \noindent
\underline{{\sc Activation $k$ Disjoint $st$-Paths} ({\sc Activation $k$-DP})} \\
{\em Input:} \ A multi-graph $G=(V,E)$ with activation costs $\{c_e^u,c_e^v\}$ for each $uv$-edge $e \in E$, 
\hphantom{Input: } $s,t \in V$, and an integer $k$. \\
{\em Output:}  An edge set $F \subs E$  of $k$ internally disjoint $st$-paths of minimum activation cost.
\end{minipage}}\end{center}


{\sc Activation $k$-DP} admits an easy approximation ratio $2$, c.f. \cite[Corollary~15.4]{N-sur}
and is polynomially solvable on bounded treewidth graphs \cite{AE-tw}.
{\sc Node-Weighted Activation $k$-DP} admits a polynomial time algorithm, 
by a reduction to the ordinary {\sc Min-Cost $k$-DP}.  
However, the complexity status of {\sc Min-Power $k$-DP} is open even for unit power costs -- 
it is not known whether the problem is in P or is NPC; this is so even for $k=2$. 

In the augmentation version of the problem {\sc Activation $k$-DP Augmentation} , 
we are also given a subgraph $G_0=(V,E_0)$ of $G$ of activation cost zero that 
already contains $k-1$ disjoint $st$-paths, and seek an augmenting edge set 
$F \subs E \sem E_0$ such that $G_0 \cup F$ contains $k$ disjoint $st$-paths.
The following lemma was implicitly proved in \cite{AE}. 

\begin{lemma} [\cite{AE}] \label{l:folk}
If {\sc Activation $2$-DP Augmentation} admits a polynomial time algorithm then {\sc Activation $2$-DP}  admits approximation ratio $1.5$.
\end{lemma}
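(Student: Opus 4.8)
I would reduce {\sc Activation $2$-DP} to polynomially many calls of two polynomial-time subroutines: an exact algorithm for {\sc Activation $st$-Path} (which is solvable in polynomial time, e.g.\ via the ``levels reduction'' of \cite{LN}; see \cite{N-sur}), and the assumed algorithm for {\sc Activation $2$-DP Augmentation}. Fix an optimal solution $F^*=P_1^* \cup P_2^*$ consisting of two internally disjoint $st$-paths, and write $\mathrm{opt}=\ell_{F^*}(V)$, $l_s=\ell_{F^*}(s)$, $l_t=\ell_{F^*}(t)$. The key structural observation is that, since $P_1^*$ and $P_2^*$ are internally disjoint, every $v\notin\{s,t\}$ is incident in $F^*$ only to the edges of the single path (if any) containing it; hence $\ell_{F^*}(V\sem\{s,t\})=\ell_{P_1^*}(V\sem\{s,t\})+\ell_{P_2^*}(V\sem\{s,t\})$, and the internally cheaper of the two paths, say $P_1^*$, satisfies $\ell_{P_1^*}(V\sem\{s,t\})\le\frac12(\mathrm{opt}-l_s-l_t)$.

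\textbf{The algorithm.}
First I would ``guess'' the two numbers $l_s$ and $l_t$: each equals $0$ or $c_e^s$ (resp.\ $c_e^t$) for some edge $e$ incident to $s$ (resp.\ $t$), so there are only $O(|\delta(s)|\cdot|\delta(t)|)$ pairs to try, and the algorithm outputs the cheapest solution produced over all of them. For a fixed guess $(l_s,l_t)$, form $G'$ from $G$ by deleting every edge $e\in\delta(s)$ with $c_e^s>l_s$ and every edge $e\in\delta(t)$ with $c_e^t>l_t$, and then zeroing the activation costs at $s$ and at $t$ of all surviving edges. In $G'$ I would compute a minimum activation-cost single $st$-path $P$; set $E_0:=E(P)$ with activation cost zero; keep the remaining edges of $G'$ with their modified costs; run the {\sc Activation $2$-DP Augmentation} algorithm to obtain $F\subs E(G')\sem E_0$ with $E_0\cup F$ containing two internally disjoint $st$-paths; and output $P\cup F$, re-evaluated with the original costs of $G$.

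\textbf{Why the bound holds (for the correct guess).}
When $l_s=\ell_{F^*}(s)$ and $l_t=\ell_{F^*}(t)$, both $P_1^*$ and $P_2^*$ survive in $G'$ (their edges at $s,t$ cost at most $l_s,l_t$ by definition of $\ell_{F^*}$), so, first, $P_1^*$ is a feasible single $st$-path in $G'$ whose $G'$-cost equals its internal cost, giving $\ell^{G'}_P(V)\le\ell_{P_1^*}(V\sem\{s,t\})\le\frac12(\mathrm{opt}-l_s-l_t)$; second, $F':=(P_1^*\cup P_2^*)\sem E(P)$ is a feasible augmentation of $E_0$ whose $G'$-cost is at most $\ell_{P_1^*\cup P_2^*}(V\sem\{s,t\})=\mathrm{opt}-l_s-l_t$, so the returned $F$ has $\ell^{G'}_F(V)\le\mathrm{opt}-l_s-l_t$. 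Finally, in the original costs of $G$, every edge of $P\cup F$ at $s$ (resp.\ $t$) costs at most $l_s$ (resp.\ $l_t$), while at each internal $v$ one has $\ell_{P\cup F}(v)=\max\{\ell_P(v),\ell_F(v)\}\le\ell_P(v)+\ell_F(v)$, with internal-node costs identical in $G$ and $G'$. Summing yields
\begin{align*}
\ell_{P\cup F}(V) &\le l_s+l_t+\ell^{G'}_P(V)+\ell^{G'}_F(V) \le l_s+l_t+\tfrac12(\mathrm{opt}-l_s-l_t)+(\mathrm{opt}-l_s-l_t)\\
&= \tfrac32\,\mathrm{opt}-\tfrac12(l_s+l_t)\le \tfrac32\,\mathrm{opt}.
\end{align*}

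\textbf{Main obstacle.}
The reduction itself is easy; the delicate part is the cost bookkeeping at $s$ and $t$. The naive reduction (cheapest single $st$-path, then one augmentation, with no special treatment of $s,t$) only gives ratio $2$, because a single $st$-path can already be forced to pay the full optimal cost at both $s$ and $t$. The whole point is to strip those two charges off by guessing $l_s,l_t$, so that the single path $P$ is billed only $\frac12(\mathrm{opt}-l_s-l_t)$ on internal nodes and the $s,t$-charge is paid just once; the steps that need care are verifying that the edge deletions remove neither $P_1^*$ nor $P_2^*$, that internal-node costs are untouched by the modification, and that the final re-evaluation at $s$ and $t$ is bounded by $l_s$ and $l_t$.
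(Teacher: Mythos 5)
Your proposal is correct and follows essentially the same route as the paper: guess $l_s,l_t$ from the $O(\deg_G(s)\cdot\deg_G(t))$ candidates, delete/zero the edges at $s$ and $t$ accordingly, take a minimum activation-cost $st$-path (which costs at most half the remaining optimum since the two optimal paths are internally disjoint), and then invoke the augmentation oracle. Your write-up merely carries out the cost bookkeeping at $s$ and $t$ more explicitly than the paper's sketch, but the argument is the same.
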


The justification of Lemma~\ref{l:folk} is as follows. 
We may assume that $c_e^s=0$ and $c_e^t=0$ for every edge $e$ incident to $s$ or to $t$, respectively. 
For this, we ``guess'' the values $l_s=\ell_{F^*}(s)$ and $l_t=\ell_{F^*}(t)$ of some optimal solution $F^*$ at $s$ and $t$, respectively;
there are at most  $\deg_G(s) \cdot \deg_G(t)$ choices, so we can try all choices and return the best outcome.
Then for every edge $e=sv \in E$, remove  $e$ if $c_e^s> l_s$ and set $c^e_s=0$ otherwise, 
and apply a similar operation on edges incident to $t$. 
One can see that the new instance is equivalent to the original one.
Since the activation cost incurred at $s$ and $t$ is now zero, 
the cheaper among the two disjoint $st$ paths of $F^*$ has activation cost at most half $\f{1}{2}{\sf opt}$, 
where ${\sf opt}$ is the optimal solution value (to the modified problem). 
Thus if we compute an optimal $st$-path $P$ and and optimal augmenting edge set for $P$,
the overall activation cost will be $\f{3}{2}{\sf opt}$. 

When the paths are required to be only edge disjoint we get the {\sc Activation $k$-EDP} problem. 
This problem admits an easy ratio $2k$. 
Lando \& Nutov \cite{LN} improved the approximation ratio to $k$ 
by showing that {\sc Min-Power $k$-EDP Augmentation} (the augmentation version of {\sc Min-Power $k$-EDP})
admits a polynomial time algorithm.
This algorithm extends to the activation case, see \cite{N-sur}.  
For simple graphs, {\sc Min-Power $k$-EDP} admits ratio $O(\sqrt{k})$ \cite{N-kedp}.
On the other hand \cite{N-snw} shows that ratio $\rho$ for {\sc Min-Power} or {\sc Node-Weighted $k$-EDP} 
with unit costs/weights implies ratio $1/2\rho^2$ for the {\sc Densest $\ell$-Subgraph} problem,
that currently has best known ratio $O(n^{-(1/4+\epsilon)})$ \cite{BCCF} 
and approximation threshold $\Omega\left(n^{-1/poly (\log \log n)}\right)$ \cite{M}.

Based on an idea of Srinivas \& Modiano \cite{SM}, Alqahtani \& Erlebach \cite{AE} 
showed that {\sc Activation $2$-EDP} is not harder to approximate than {\sc Activation $2$-DP}.

\begin{lemma} [Alqahtani \& Erlebach \cite{AE}] \label{l:SM}
If {\sc Activation $2$-DP} admits approximation $\rho$ then so is {\sc Activation $2$-EDP}. 
\end{lemma}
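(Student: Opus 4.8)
The plan is to reduce {\sc Activation $2$-EDP} to {\sc Activation $2$-DP} by an auxiliary-graph construction, following the classical way of turning edge-disjoint paths into internally disjoint paths through a line graph (this is the Srinivas \& Modiano idea). Given an instance of {\sc Activation $2$-EDP} on $G=(V,E)$ with terminals $s,t$, I would first apply the preprocessing used for Lemma~\ref{l:folk}: guess the optimal levels $l_s,l_t$ at $s$ and $t$, delete the edges at $s$ or $t$ that these levels cannot afford, and zero out the activation costs incurred at $s$ and $t$, so that afterwards $c_e^s=c_e^t=0$ for every edge $e$ incident to $s$ or to $t$. Then form $G'$ with vertex set $E\cup\{\sigma,\tau\}$, where two vertices $e,f\in E$ are adjacent iff $e$ and $f$ share an endpoint in $G$, and $\sigma$ (resp. $\tau$) is joined to every edge of $G$ incident to $s$ (resp. to $t$).

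The first step is the feasibility correspondence: two internally disjoint $\sigma\tau$-paths in $G'$ correspond to two edge-disjoint $st$-paths in $G$ and conversely. In one direction, an $st$-path of $G$, read as the sequence of its edges, is a simple path in $G'$ from a neighbour of $\sigma$ to a neighbour of $\tau$, and two edge-disjoint $st$-paths use disjoint sets of edges, hence map to two paths in $G'$ with disjoint interiors. In the other direction, the interior of a $\sigma\tau$-path in $G'$ is a connected set of edges of $G$ containing both $s$ and $t$, hence contains an $st$-path, and two internally disjoint such paths use disjoint edge sets of $G$, hence yield edge-disjoint $st$-paths.

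The second step, which I expect to be the main obstacle, is transferring the activation costs so that the optimum is preserved --- not merely preserved up to a constant factor. The natural attempt is to put on the $G'$-edge between $e$ and $f$ (which share a $G$-vertex $y$) the activation costs $c_e^y$ and $c_f^y$ at its two ends, and cost $0$ at $\sigma$ and $\tau$ (legitimate since $c^s$ and $c^t$ are now identically $0$). The trouble is that in $G$ the cost charged at a vertex $y$ is a single maximum, $\max\{c_g^y:g\in\delta_F(y)\}$, whereas the edges of $G$ incident to $y$ form a clique in $G'$ and their costs get distributed over several vertices of $G'$; with the natural assignments this can inflate the cost by a factor of two, and the symmetric phenomenon plagues the alternative ``two copies of $G$ glued at $s$ and $t$'' construction, where the cost at every vertex used by both paths is counted twice. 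To make the factor exactly $1$ one has to ensure that each original per-vertex cost is paid only once. I expect this to require combining the transformation with the structure of inclusion-minimal pairs of edge-disjoint $st$-paths --- the union of two such paths splits at its cut vertices into pieces inside each of which the two subpaths are already internally disjoint, so one can reassemble the pieces while charging each vertex once --- together with the $c^s=c^t=0$ preprocessing. Once such a cost-preserving correspondence is established, running the $\rho$-approximation for {\sc Activation $2$-DP} on $G'$ and projecting the result back to $G$, over all guesses of $l_s,l_t$, gives a feasible {\sc Activation $2$-EDP} solution of cost at most $\rho$ times the optimum.
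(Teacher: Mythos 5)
First, note that the paper does not actually prove Lemma~\ref{l:SM}; it is imported verbatim from \cite{AE} (based on a transformation of Srinivas and Modiano \cite{SM}), so your attempt has to be judged on its own. As a proof it is incomplete: the feasibility half of your line-graph construction is correct (disjoint interiors of two $\sigma\tau$-paths in $G'$ are disjoint connected edge sets of $G$ each containing an $st$-path, and conversely), but the entire substance of the lemma is the \emph{cost-preserving} correspondence, and that is precisely the step you leave open (``I expect this to require\dots'', ``Once such a cost-preserving correspondence is established\dots''). Worse, the natural cost assignment you propose does not merely inflate by a factor of $2$ --- it can also \emph{underestimate}, which is fatal for an approximation-preserving reduction. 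Concretely, already for a single path $s-a-b-t$ with $c_{ab}^a=c_{ab}^b=1$ and all other costs $0$, the true activation cost is $\ell(a)+\ell(b)=2$, while in $G'$ the vertex corresponding to the edge $ab$ pays only $\max(c_{ab}^a,c_{ab}^b)=1$ and all other vertices pay $0$. The root cause is structural: in $G$ each edge $e=uv$ contributes $c_e^u$ and $c_e^v$ to two \emph{separate} maxima (at $u$ and at $v$), whereas in the line graph both contributions land on the single vertex $e$ and get collapsed into one maximum; symmetrically, the single maximum at a vertex $y$ of $G$ gets spread over several vertices of $G'$. So neither direction of the needed inequality between ${\sf opt}(G')$ and ${\sf opt}(G)$ holds, and a $\rho$-approximation on $G'$ gives no guarantee better than a constant times $\rho$ on $G$.

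Your closing sentence points at the right object --- an inclusion-minimal $2$-EDP solution is a chain of cycles glued at cut vertices, and inside each cycle the two subpaths are internally disjoint --- but invoking that structure is not the same as using it. A complete argument must (a) prove that decomposition, and (b) give a transformation, or a recursion over the cut vertices that calls the {\sc Activation $2$-DP} algorithm on subinstances, in which the level paid at each cut vertex (which is incident to up to four solution edges, two from each adjacent cycle) is charged exactly once; this is exactly the issue the ``forced cost'' bookkeeping $\al_F(l_i,l_j)$ of Section~\ref{s:1} is designed to handle in the augmentation setting, and the naive two-copies-per-vertex gadget runs into the same doubling at shared vertices that you yourself flag. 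Until that accounting is carried out, the proposal identifies the obstacle but does not overcome it.
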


Alqahtani \& Erlebach \cite{AE} claimed that 
{\atdpa} admits a polynomial time algorithm,
and thus (by Lemmas \ref{l:folk} and \ref{l:SM}) 
both {\sc Activation $2$-DP} and {\sc Activation $2$-EDP} admit approximation ratio $1.5$. 
In the next section we will give an example that the approximation 
ratio of the \cite{AE} algorithm for {\atdpa} is not better than $2$. 
Then we will give a different polynomial 
algorithm for {\atdpa} that is based on dynamic programming. 
Thus combining with Lemmas \ref{l:folk} and \ref{l:SM} we have the following. 

\begin{theorem} \label{t1}
{\atdpa} admits a polynomial time algorithm. Thus both {\sc Activation $2$-DP} and {\sc Activation $2$-EDP} admit approximation ratio $1.5$. 
\end{theorem}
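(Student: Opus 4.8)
We sketch how we would prove Theorem~\ref{t1}. By Lemmas~\ref{l:folk} and~\ref{l:SM} it suffices to give a polynomial time algorithm for \atdpa, so we focus on the augmentation problem and first normalize it. Fix a simple $st$-path $P_0=u_0u_1\cdots u_m$ with $u_0=s$ and $u_m=t$ contained in $G_0$, and replace $G_0$ by $P_0$ while moving every other edge of $E_0$ into $E\sem E_0$ with both of its activation costs set to $0$; the resulting instance is equivalent, since a feasible solution to either one yields one of the same value to the other. So from now on $G_0=P_0$ is a simple path, possibly with some zero-cost edges still available in $E\sem P_0$, and the task is to find $F\subs E\sem P_0$ minimizing $\ell_F(V)$ such that $P_0\cup F$ contains two internally disjoint $st$-paths.

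The combinatorial heart is a normal form for an optimal $F$. Call a simple path of $G$ an \emph{ear} if both of its ends lie on $P_0$, say at $u_a$ and $u_b$ with $a<b$, and all of its internal vertices lie off $P_0$; its \emph{span} is the open interval $(a,b)$. An elementary argument shows that an internal vertex $u_i$ of $P_0$ is a cut vertex separating $s$ from $t$ in $P_0\cup F$ exactly when no ear contained in $F$ has a span containing $i$; hence (the degenerate case $m=1$ being a single {\sc Activation $st$-Path} computation) $P_0\cup F$ has two internally disjoint $st$-paths if and only if the ears of $F$ together span all of $\{1,\dots,m-1\}$. Starting from any optimal $F$ together with the two disjoint $st$-paths it supports --- whose off-$P_0$ excursions form such a spanning family of ears --- one repeatedly deletes an ear whose span is covered by the others, or merges two ears that share an off-$P_0$ vertex into a single wider ear; one checks that each operation keeps $F$ feasible and does not increase $\ell_F(V)$, where one must be careful that $\ell_F(V)=\sum_v\max\{c_e^v:e\in\delta_F(v)\}$ is a sum of \emph{maxima}, not of edge costs, so that a naive rerouting argument does not apply verbatim. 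In the end $F$ has maximum degree $2$, is a vertex-disjoint union of ears chained only at vertices of $P_0$, and its ears' spans form a ``staircase'' in which no point of $\{1,\dots,m-1\}$ lies in three spans.

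Granting this normal form, we solve \atdpa by a dynamic program that sweeps $u_0,u_1,\dots,u_m$ from left to right. For all $0\le a<b\le m$ and all \emph{endpoint levels} $\al,\be$ --- each one of the finitely many activation costs $c_e^v$ of edges $e$ at the respective endpoint $v$ --- we precompute $\mathrm{cost}(a,b,\al,\be)$, the minimum activation cost of a simple $u_a$--$u_b$ path of $G$ that avoids the edges of $P_0$ and the internal vertices of $P_0$ other than $u_a,u_b$, whose edge at $u_a$ (resp.\ at $u_b$) has activation cost there at most $\al$ (resp.\ $\be$), and whose cost is counted only at its interior vertices; each of these is a single instance of {\sc Activation $st$-Path}, which is polynomially solvable by the ``levels reduction'' of Lando \& Nutov (see \cite{LN,N-sur}). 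The sweep maintains a state recording the last one or two ears placed --- their endpoints together with the activation levels they impose at those endpoints --- and enforces the covering requirement as the frontier advances; transitions open an ear, or close an ear and pay the corresponding $\mathrm{cost}$. \textbf{The delicate point, and the step I expect to be the main obstacle, is exactly this bookkeeping}: because the objective is a sum of maxima, whenever two ears are chained at a common vertex of $P_0$ the activation level there must be charged once, as the maximum of the two ears' levels at that vertex, and a program that merely summed ear costs would over-count it and return a strictly suboptimal value --- carrying the endpoint levels through the state and through the refined cost table is what makes the program correct. The numbers of states, of transitions, and of precomputed quantities are all polynomial in $|V|$ and $m$, so the algorithm runs in polynomial time; combined with Lemmas~\ref{l:folk} and~\ref{l:SM}, it yields the $1.5$-approximation algorithms for {\sc Activation $2$-DP} and {\sc Activation $2$-EDP}.
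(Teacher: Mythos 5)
Your proposal follows essentially the same route as the paper: the reduction to augmenting a Hamiltonian-type path, the contraction of each attachment path ("ear") to a single precomputed edge indexed by its two endpoint levels, the staircase normal form for minimal solutions, and a left-to-right dynamic program whose state carries the activation levels at the frontier nodes so that the maximum at a shared chaining node is charged exactly once. The crux you flag is exactly the one the paper resolves, via the forced-cost function $f(l_i,l_j)$ and the recurrence $f=\min\{g,h\}$, so your sketch matches the paper's proof in all essentials.
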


\section{A bad example for the Alqahtani-Erlebach Algorithm} \label{s:AE}

To illustrate the idea of the \cite{AE} algorithm, let us first describe a known algorithm 
for a particular case of the {\sc Min-Cost $2$-EDP Augmentation} problem, 
where we seek to augment a Hamiltonian $st$-path $P$ of cost $0$ 
by an min-cost edge set $F$ such that $P \cup F$ contains $2$ edge disjoint $st$-paths. 
The algorithm reduces this problem to the ordinary {\sc Min-Cost $st$-Path} problem as follows, 
see Fig.~\ref{f:split}(a,b). 

\medskip 

\begin{algorithm}[H] 
\caption{{\sc Hamiltonian Min-Cost $2$-EDP Augmentation}$((V,E),c,P,\{s,t\})$} \label{alg:cost}
Construct an edge-weighted digraph $D_P$ by directing $P$ ``backward'' from $t$ to $s$, 
and directing every edge not in $P$ ''forward'' -- from predecessor to successor in $P$.   \\
Compute a shortest $st$-path $P'$ in $D_P$.                                                                          \\
Return the subset $F$ of $E$ that corresponds to the edges of $P' \sem P$.
\end{algorithm}

\medskip 

A slight modification of this algorithm works for {\sc Activation $2$-EDP Augmentation}. 
For $v \in V$ let $L_v=\{c_{vu}^v:vu \in E\}$ be the set of possible {\bf levels} at $v$. 
Apply the reduction in Algorithm~\ref{alg:cost}, and then apply a step which we call {\bf Levels Splitting}:
for every  pair $(v,l)$ where $v \in V$ and $l \in L_v$ we add a node $v_l$ of weight $l$, 
and put an edge from $u_{l_i}$ to $v_{l_j}$ if there is an edge $e=uv$ in $D_P$ with $c_e^u \leq l_i$ and $c_e^v \leq l_j$.  
The reduction here is to the {\sc Node-Weighted $st$-Path} problem.
The later problem can be easily reduced to the ordinary {\sc Min-Cost $st$-Path} problem by a step 
which we call  {\bf In-Out Splitting}: Replace each node $v \in V \sem \{s,t\}$ 
by two nodes $v^{\sf in},v^{\sf out}$ connected by the edge $v^{\sf in}v^{\sf out}$,
and redirect every edge that enters $v$ to enter $v^{\sf in}$ and every edge that leaves $v$ to leave $v^{\sf out}$,
where we assume that $s^{\sf in}=s^{\sf out}=s$ and $t^{\sf in}=t^{\sf out}=t$.
In this reduction the cost/weight of each edge $v^{\sf in}v^{\sf out}$ is the weight $w_v$ of $v$, see Fig.~\ref{f:split}(a,b,c).
This is a particular case of the ``Levels Reduction'' of \cite{LN}. 

\begin{figure}
\centering 
\includegraphics{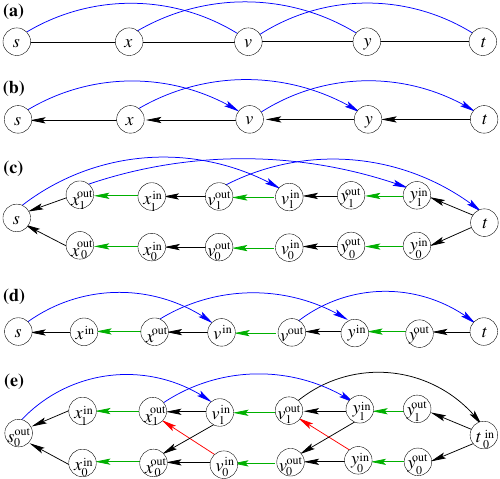}
\caption{Augmenting a Hamiltonian $st$-path to two edge/internally disjoint $st$-paths. 
Black edges have cost $0$, blue and red edges have cost $1$.
(a) Problem instance.
(b) Reducing {\sc Min-Cost $2$-EDP Augmentation} to {\sc Min-Cost $st$-Path}.
(c) Levels splitting, assuming that the activation costs of the blue edges in (a) are $1$ at $x,v,y$ and $0$ at $s,t$. 
(d) Reducing {\sc Min-Cost $2$-DP Augmentation} to {\sc Min-Cost $st$-Path}.
(e) The reduction of \cite{AE}.}
\label{f:split}
\end{figure}

One can also solve the 
version when we have ordinary edge costs and 
require that $P \cup F$ contains $2$ internally disjoint $st$-paths. 
For that, apply a standard reduction that converts edge connectivity problems into node connectivity ones, as follows
\begin{enumerate}[(i)]
\item
After step~1 of Algorithm~\ref{alg:cost}, add the In-Out Splitting step, 
where here the the cost of each edge $v^{\sf in}v^{\sf out}$ is $0$.
\item
Replace every edge $u^{\sf out}v^{\sf in} \notin P$ by the edge $u^{\sf in}v^{\sf out}$. 
\end{enumerate}
See  Fig.~\ref{f:split}(d), where after applying this reduction we switched between the names of $v^{\sf in}v^{\sf out}$, 
to be consistent with the \cite{AE} algorithm. 

The algorithm of \cite{AE} attempts to combine the later reduction with the Levels Reduction in a sophisticated way. 
In the case when 
$G$ has a zero cost Hamiltonian $st$-path $P$, 
$st \notin E$, $L=\{0,1\}$, $L_s=L_t=\{0\}$,
and $c_{uv}^u=1$ for all $e=uv \in E\sem E(P)$ and $u \in V \sem \{s,t\}$,
the \cite{AE} algorithm reduces to the following, see Fig.~\ref{f:split}(a,e).
\begin{enumerate}
\item
Construct an edge-weighted directed graph $D_P$ with nodes $s=s_0^{\sf out},t=t_0^{\sf in}$ and $4$ nodes 
$\{v_0^{\sf in},v_0^{\sf out}, v_1^{\sf in},v_1^{\sf out}\}$ for every $v \in V \sem \{s,t\}$. 
The edge of $D_P$ and their weights are:
\begin{enumerate}[(i)]
\item
For $v \in V \sem \{s,t\}$: \hspace{2.8cm}
$w(v_a^{\sf out} v_a^{\sf in})=0$ \hspace{0.15cm} $a \in \{0,1\}$.
\item
For $uv \in P$: \hspace{3.75cm}
$w(v_b^{\sf in} u_a^{\sf out})=a$ \  \ $a,b \in \{0,1\}$. \\
For $uv \notin P$: \hspace{3.75cm}
$w(u_a^{\sf out} v_b^{\sf in})=b$ \hspace{0.15cm} $a,b \in \{0,1\}$, $u_a v_b \in E$.
\end{enumerate}
\item
Compute a cheapest $st$-path $P'$ in $D_P$ and return the subset of $E$ that corresponds to $P'$. 
\end{enumerate}

\begin{figure}
\centering 
\includegraphics{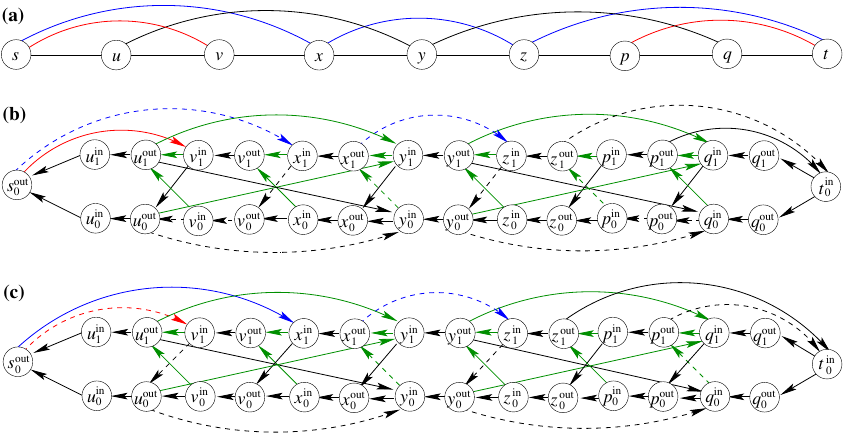}
\caption{
Illustration to the \cite{AE} Algorithm. Black edges have weight/thresholds $0$. 
(a) The input graph; colored edges have thresholds $0$ at $s,t$ and $1$ otherwise.
(b) The edge weighted directed graph $D_P,w$ constructed in the AE reduction and
the path (shown by dashed lines) in $D_P$ of weight $4$ that corresponds to the optimal solution $\{sx,xz,zt,uy,yq\}$. 
(c) The path (shown by dashed lines) in $D_P$ of weight $4$ that corresponds to the solution $\{sv,uy,xz,yq,pt\}$.
}
\label{f:01}
\end{figure}

Here for $e=uv \in E$ we write $u_a v_b \in E$ meaning that $c_e^a \leq a$ and $c_e^b \leq b$, namely,
that $uv$ can be activated by assigning $a$ units to $u$ and $b$ units to $v$.
In the example in Fig.~\ref{f:split}(d), the weight of $P'$ is $2$ while the optimal solution value is $3$. 
Still, in this example the \cite{AE} algorithm computes an optimal solution. 
We give a more complicated example, which shows that the approximation ratio of this algorithm is no better than $2$. 
Consider the graph in Fig.~\ref{f:01}(a), with the initial $st$-path 
$$
s-u-v-x-y-z-p-q-t \ .
$$
The optimal solution $\{sx,xz,zt,,uy,yq\}$ (the blue edges and the $uy,yq$ edges) has value $2$ (level assignment $l_x=l_z=1$ and $0$ otherwise);
the $s_0^{\sf out} t_0^{\sf in}$-path in $D_P$ of weight $4$ that corresponds to this solution is (see Fig.~\ref{f:01}(b)):
$$
s_0^{\sf out} 
{\color{blue} \rightarrow} \ x_1^{\sf in} \rightarrow v_0^{\sf out} \rightarrow v_0^{\sf in} \rightarrow u_0^{\sf out} \rightarrow y_0^{\sf in} 
{\color{olive} \rightarrow} \ x_1^{\sf out} 
{\color{blue} \rightarrow} \ z_1^{\sf in} \rightarrow y_0^{\sf out} \rightarrow \ q_0^{\sf in} \rightarrow p_0^{\sf out} \rightarrow p_0^{\sf in} 
{\color{olive} \rightarrow} \ z_1^{\sf out} \rightarrow \ t_0^{\sf in} \ .
$$
The solution $\{sv,uy,xz,yq,pt\}$ (the red edges and the $uy,yq,xz$ edges) has value $4$ (level assignment $l_v=l_x=l_z=l_p=1$ and $0$ otherwise);
the $s_0^{\sf out} t_0^{\sf in}$-path in $D_P$ of weight $4$ that corresponds to this solution is (see Fig.~\ref{f:01}(c)):
$$
s_0^{\sf out} 
{\color{red}   \rightarrow} \ v_1^{in} \rightarrow u_0^{\sf out} \rightarrow y_0^{\sf in} 
{\color{olive} \rightarrow} \ x_1^{\sf out} 
{\color{blue} \rightarrow} \ z_1^{\sf in} \rightarrow y_0^{\sf out} \rightarrow q_0^{\sf in} 
{\color{olive} \rightarrow} \ p_1^{\sf out} \rightarrow t_0^{\sf in} \ .
$$
So in $D_P$, both paths have the same weight $4$, 
but one path gives a solution of value $2$ while the other of value $4$. 

\section{Proof of Theorem~\ref{t1}} \label{s:1}

In this section we will prove Theorem~\ref{t1} -- that {\atdpa} admits a polynomial time algorithm

Recall that for $F \subs E$ and $v \in V$ we denote by $\displaystyle \ell_F(v)=\max_{e \in \delta_F(v)}c_e^v$ 
the activation cost incurred by $F$ at $v$,
and that for $S \subs V$ the activation cost incurred by $F$ at nodes in $S$ is
\[
\displaystyle \ell_F(S)=\sum_{v \in S} \ell_F(v) =\sum_{v \in S}\max_{e \in \delta_F(v)}c_e^v \ .
\]

For the proof of Theorem~\ref{t1} it would be convenient to consider a more general problem 
where each edge $e=uv \in E$ has three costs $c_e^u,c(e),c_e^v$, where 
$c_e^u,c_e^v$ are the activation costs of $e$ and $c(e)$ is the ordinary ``middle'' cost of $e$. 
We now describe a method to convert an {\atdpa} instance into an equivalent instance 
in which $P$ is a Hamiltonian path but every edge has three costs as above. 
We call this problem {\sc 3-Cost Hamiltonian} {\atdpa}.

Let $\II=(G=(V,E),c,s,t,P)$ be an {\atdpa} instance. 
Let us say that a $uv$-path $Q$ in $G$ is an {\bf attachment path} if $u,v \in P$ but $Q$ has no internal node in $P$.
Note that any inclusion minimal edge set that contains $2$ internally disjoint $st$-paths is a cycle. 
This implies that if $F$ is an inclusion minimal solution to {\atdpa} then $\deg_F(v) \in \{0,2\}$ for every node $v \in V \sem V(P)$,
hence $F$ partitions into attachment paths. 
This enables us to apply a prepossessing similar to metric completion, and to construct  
an equivalent {\sc 3-Cost Hamiltonian} {\atdpa} instance $\hat \II=(\hat G=(\hat V,\hat E),\hat c, s,t,P)$.
For this, for every $u,v \in P$ and $(l_u,l_v) \in L_u \times L_v$ do the following.
\begin{enumerate}
\item
Among all attachment $uv$-paths that have activation costs $l_u$ at $u$ and $l_v$ at $v$ (if any), 
compute the cheapest one $Q(l_u,l_v)$.
\item
If $Q(l_u,l_v)$ exists, add a new edge $e=uv$ with activation costs $\hat c_e^u=l_u, \hat c_e^v=l_v$, 
and ordinary cost $\hat c_e= \ell_{Q(l_u,l_v)}(V)-(l_u+l_v)$ being the activation cost of $Q(l_u,l_v)$ on internal nodes of $Q(l_u,l_v)$.
\end{enumerate}
After that, remove all nodes in $V \sem V(P)$. Now $P$ is a Hamiltonian path, and we get a 
{\sc 3-Cost Hamiltonian} {\atdpa} instance $\hat \II=(\hat G,\hat c,s,t,P)$.
It is easy to see that the instance $\hat \II$ can be constructed in polynomial time. 
Note that the instance $\II'$ may have many parallel edges, but this is allowed, also in the original instance $\II$. 

Now consider some feasible solution $F$ to $\II$. 
Replacing every attachment paths contained in $F$ by a single edge as in step 2 above
gives a feasible solution $\hat F$ to $\hat \II$ of value at most that of $F$.  
Conversely, if $\hat F$ is a feasible $\hat \II$ solution, then replacing every edge in $\hat F$ 
by an appropriate path gives a feasible solution $F$ to $\II$ of value at most that of $\hat F$.  
Consequently, the new instance is equivalent to the original instance in the sense that 
every feasible solution to one of the instances can be converted to a feasible solution to the other instance of no greater value. 
We summarize this as follows.

\begin{corollary}
If {\sc 3-Cost Hamiltonian} {\atdpa} admits a polynomial time algorithm 
then so is {\atdpa}.
\end{corollary}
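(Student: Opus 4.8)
The construction of $\hat\II$ from $\II$ and the claim that it can be carried out in polynomial time are already spelled out above, so the task reduces to verifying the equivalence: every feasible solution of one instance can be turned into a feasible solution of the other of no larger value. The corollary then follows by running the assumed algorithm on $\hat\II$ and expanding its output. The plan is to first reduce to inclusion-minimal solutions: since all costs involved are non-negative and $\delta_{F'}(v)\subs\delta_F(v)$ whenever $F'\subs F$, replacing a feasible solution by an inclusion-minimal feasible subset never increases $\ell_F(V)$ nor the total middle cost. By the cycle observation quoted above, such a minimal solution to {\atdpa} is a disjoint union of attachment paths $Q_1,\dots,Q_r$, which is precisely the structural fact that lets the ``metric completion'' $\hat\II$ see all relevant solutions.

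For the direction $\II\to\hat\II$ I would take an inclusion-minimal optimal $F^*=Q_1\cup\cdots\cup Q_r$, where $Q_j$ is a $u_jv_j$-path with endpoint activation costs $l_{u_j}$ at $u_j$ and $l_{v_j}$ at $v_j$, and replace each $Q_j$ by the edge $e_j=u_jv_j$ that the construction created for the pair $(l_{u_j},l_{v_j})$ (it exists, since $Q_j$ itself is such an attachment path). Feasibility of $\hat F=\{e_1,\dots,e_r\}$ follows by contracting each $Q_j$ inside the cycle formed by the two internally disjoint $st$-paths of $G_0\cup F^*$ -- a cycle that uses all of $F^*$ by minimality. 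For the value I would split the bound by node location: at a node $v\in V(P)$ the edges of $\hat F$ and of $F^*$ incident to $v$ carry the same $v$-thresholds and the defining maximum distributes over the attachment paths, so $\ell_{\hat F}(v)=\ell_{F^*}(v)$; and each middle cost $\hat c_{e_j}$ is by definition the activation cost of a cheapest suitable attachment path on its internal nodes, hence at most $\ell_{Q_j}$ on $V(Q_j)\sem\{u_j,v_j\}$, so $\hat c(\hat F)\le\sum_{v\in V\sem V(P)}\ell_{F^*}(v)$. Summing the two parts yields $\ell_{\hat F}(\hat V)+\hat c(\hat F)\le\ell_{F^*}(V)$.

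For the direction $\hat\II\to\II$ I would expand each edge $e=uv$ of a feasible $\hat F$ back to the stored path $Q(\hat c_e^u,\hat c_e^v)$; the cost bound is the same as before, except that a ``maximum over several attachment paths through a node outside $V(P)$'' is now only upper bounded by the corresponding sum of middle costs, which is all that is needed. I expect the one genuinely delicate point to be feasibility of this expansion: two edges of $\hat F$ may expand to attachment paths that share an internal node of $V\sem V(P)$, so the union of the two expanded $st$-paths need not be internally disjoint. The fix is that $G_0$, which contains the $st$-path $P$, is kept intact, so in $G_0\cup F$ the path $P$ still joins $s$ to $t$ after the deletion of any single node; hence no single node separates $s$ from $t$, and Menger's theorem produces two internally disjoint $st$-paths in $G_0\cup F$ regardless of how the expanded paths overlap. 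Combining the two directions shows that $\II$ and $\hat\II$ have the same optimum and that the expansion of an optimal $\hat F$ is an optimal solution of $\II$, computed in polynomial time.
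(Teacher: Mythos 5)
Your overall route is the same as the paper's: contract the attachment paths of an inclusion-minimal solution to obtain a solution of $\hat\II$ of no larger value, and expand the edges of a solution of $\hat\II$ back into the stored cheapest attachment paths; the paper states these two conversions without further detail, and your cost accounting for both directions is correct. However, the justification you give for the one step you yourself flag as delicate --- feasibility of the expansion --- is wrong as written. You claim that ``in $G_0\cup F$ the path $P$ still joins $s$ to $t$ after the deletion of any single node,'' but $P$ is an $st$-path passing through every node of $V(P)$ (indeed a Hamiltonian path of $\hat G$), so deleting any internal node of $P$ destroys $P$. Hence your appeal to Menger's theorem is unsupported precisely for the nodes $v\in V(P)\sem\{s,t\}$, and these are the nodes for which the two expanded paths are actually required to be disjoint.

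The argument does go through, but it needs a case split. For $v\in V\sem V(P)$ --- in particular for an internal node shared by two expanded attachment paths, the situation you were worried about --- $P$ survives the deletion of $v$ and your reasoning is fine. For $v\in V(P)\sem\{s,t\}$ you must instead use the feasibility of $\hat F$: one of the two internally disjoint $st$-paths in $P\cup\hat F$ avoids $v$, and expanding each of its $\hat F$-edges into the corresponding attachment path introduces only nodes of $V\sem V(P)$, so the resulting $st$-walk in $G_0\cup F$ still avoids $v$. With this case analysis Menger's theorem applies and the expansion is feasible; without it the step does not follow from what you wrote.
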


So from now and on our problem is {\sc 3-Cost Hamiltonian} {\atdpa}. 
Let us denote by $\tau(F)$ the sum of the ordinary and the activation cost of $F \subs E$, namely
\[
\tau(F)=\ell_F(V)+c(F)=\sum_{v \in V}\max_{e \in \delta_F(v)}c_e^v+\sum_{e \in F} c(e) \ .
\]
Let ${\sf opt}$ denote an optimal solution value for an instance of this problem. 
We will assume that $V=\{0,1, \ldots,n\}$ and that $P=0-1 - \cdots - n$ is a (Hamiltonian) $(0,n)$-path,
and view each edge $ij \notin P$ as a directed edge $(i,j)$ where $i<j$. 
Our goal is to find and edge set $F \subset E$ such that $P \cup F$ contains $2$ internally disjoint $(0,n)$-paths 
and such that $\tau(F)$ is minimal.

\begin{definition}
For $0 \leq i < j< n$ let $\FF_{i,j}$ denote the family of all edge sets $F \subs E$ that satisfy the following 
two conditions. 
\begin{enumerate}[(i)]
\item 
$F$ is an inclusion minimal edge set such that $P \cup F$ contains $2$ internally disjoint $(j-1,n)$-paths.
\item
No edge in $F$ has an end strictly preceding $i$, namely, if $(x,y) \in F$ then $x \geq i$.
\end{enumerate}
\end{definition}

We will need the following (essentially known) ``recursive'' property of the sets in $\FF_{i,j}$.

\begin{lemma} \label{l:F}
$F \in \FF_{i,j}$ if and only if there exists $i \leq x <j$ such that exactly one of the following holds, see Fig.~\ref{f:st2}(a). 
\begin{enumerate}[(i)]
\item
$F=\{(x,n)\}$. 
\item
$F=\{(x,y)\} \cup F'$ for some $(x,y) \in E$ with $i \leq x<j<y < n$ and $F' \in \FF_{j,y}$.
\end{enumerate}
\end{lemma}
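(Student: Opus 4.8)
The plan is to first establish a purely combinatorial ``interval cover'' characterization of $\FF_{i,j}$, and then read off both directions of the equivalence from it with only elementary bookkeeping. Since $P$ is a path and inclusion-minimality forbids redundant edges, every $F\in\FF_{i,j}$ consists only of \emph{chords}, i.e.\ edges $(a,b)\in E\sem P$ with $a<b$; say such a chord \emph{covers} a node $v$ if $a<v<b$. I will show that $F\in\FF_{i,j}$ if and only if: \textbf{(C1)} every chord of $F$ has left endpoint $\ge i$; \textbf{(C2)} every $v\in\{j,j+1,\ldots,n-1\}$ is covered by some chord of $F$; and \textbf{(C3)} every chord of $F$ is the only chord of $F$ that covers some particular $v\in\{j,\ldots,n-1\}$. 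Here (C1) is exactly part (ii) of the definition of $\FF_{i,j}$. For part (i): deleting a node $v$ with $j\le v\le n-1$ breaks $P$ into $\{0,\ldots,v-1\}\ni j-1$ and $\{v+1,\ldots,n\}\ni n$, and the only edges of $P\cup F$ joining these two sets are the chords covering $v$, so $(P\cup F)-v$ has a $(j-1,n)$-path iff some chord of $F$ covers $v$. As $j<n$, the $P$-subpath from $j-1$ to $n$ has an internal node, and Menger's theorem then yields that $P\cup F$ contains two internally disjoint $(j-1,n)$-paths iff $(P\cup F)-v$ has a $(j-1,n)$-path for every $v\in\{j,\ldots,n-1\}$ (the nodes $v\le j-2$ never separate $j-1$ from $n$); this is (C2). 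Finally, inclusion-minimality subject to (C2) is precisely (C3), since a chord $e\in F$ is removable iff (C2) still holds for $F\sem\{e\}$ iff no $v\in\{j,\ldots,n-1\}$ is covered by $e$ alone.

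For the ``if'' direction, fix $i\le x<j$. If $F=\{(x,n)\}$, then $(x,n)$ covers $\{x+1,\ldots,n-1\}\supseteq\{j,\ldots,n-1\}$ (since $x<j$), so (C1)--(C3) hold and $F\in\FF_{i,j}$. If $F=\{(x,y)\}\cup F'$ with $i\le x<j<y<n$ and $F'\in\FF_{j,y}$, apply the characterization to $F'$ (with parameters $j,y$): its chords have left endpoint $\ge j$, they cover $\{y,\ldots,n-1\}$, and each privately covers some node of $\{y,\ldots,n-1\}$; also $(x,y)$ covers exactly $\{x+1,\ldots,y-1\}\supseteq\{j,\ldots,y-1\}$. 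Then (C2) for $F$ holds node by node (nodes below $y$ via $(x,y)$, nodes $\ge y$ via $F'$); (C1) is immediate; and (C3) holds because $(x,y)$ privately covers $j$ (no chord of $F'$ covers $j$, its left endpoint being $\ge j$), while every chord of $F'$ keeps a private witness $\ge y$, which $(x,y)$ does not cover. Hence $F\in\FF_{i,j}$.

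For the ``only if'' direction, let $F\in\FF_{i,j}$. By (C2) some chord covers $j$, and by (C1) it has left endpoint in $[i,j)$; if two distinct chords covered $j$, the one with the smaller right endpoint would cover, within $\{j,\ldots,n-1\}$, only nodes the other also covers, contradicting (C3). So there is a \emph{unique} chord $(x,y)$ covering $j$, with $i\le x<j$ and $y>j$; moreover $(x,y)$ is the only chord of $F$ with left endpoint $<j$ (a chord with left endpoint $<j$ and right endpoint $\le j$ covers nothing in $\{j,\ldots,n-1\}$, violating (C3)), so every other chord of $F$ has left endpoint $\ge j$. If $y=n$, then any further chord of $F$ would cover only nodes of $\{j,\ldots,n-1\}$, all of which $(x,n)$ covers too, contradicting (C3); hence $F=\{(x,n)\}$, which is case (i). If $y<n$, put $F'=F\sem\{(x,y)\}$ and verify $F'\in\FF_{j,y}$ via (C1)--(C3): (C1) holds as noted; every $v\in\{y,\ldots,n-1\}$ is covered in $F$ by a chord other than $(x,y)$, giving (C2); and the private witness in $F$ of any $e\in F'$ cannot be $<y$ (otherwise $(x,y)$ would cover it as well), so it lies in $\{y,\ldots,n-1\}$ and survives in $F'$, giving (C3). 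Thus $F=\{(x,y)\}\cup F'$ with $F'\in\FF_{j,y}$, which is case (ii). The cases are mutually exclusive: in (ii), $F'\ne\empt$ since $\{y,\ldots,n-1\}\ne\empt$, so $|F|\ge2$, whereas in (i) $|F|=1$.

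The one ingredient that is not pure interval bookkeeping is the equivalence, used inside the characterization, between ``$P\cup F$ contains two internally disjoint $(j-1,n)$-paths'' and ``every $v\in\{j,\ldots,n-1\}$ is covered by a chord of $F$''; this is where Menger's theorem enters, and the only point needing care is the degenerate case in which $j-1$ and $n$ are themselves joined by a chord, in which case both statements are trivially true. I expect this to be the main (and still minor) obstacle; the remainder is the classical description of inclusion-minimal augmentations of a path to $2$-vertex-connectivity between two of its endpoints.
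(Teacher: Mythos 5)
Your proof is correct, and it takes a more systematic route than the paper's. The paper argues directly on a minimal $F$: it extracts an edge $(x,y)$ with $x<j<y$ via a vertex-cut argument at $j$, shows $P\cup(F\setminus\{(x,y)\})$ still gives two disjoint $(y-1,n)$-paths via a cut argument at $y$, and then uses an exchange argument on the lowest left endpoint of the remaining edges to rule out a second edge crossing $j$; it only details the ``only if'' direction and leaves the singleton case and the converse as ``easy to see''. You instead first prove a standalone characterization of $\FF_{i,j}$ --- (C1) left endpoints at least $i$, (C2) every $v\in\{j,\dots,n-1\}$ covered by a chord, (C3) each chord has a private witness --- justified by Menger's theorem plus the observation that deleting $v$ separates $j-1$ from $n$ exactly when no chord covers $v$ (the same observation that underlies the paper's cut arguments), and then both directions of the lemma become pure interval bookkeeping. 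What your approach buys is completeness and robustness: both implications are verified, the mutual exclusivity of (i) and (ii) is addressed, the uniqueness of the chord crossing $j$ is actually proved rather than implicit, and the degenerate case of a chord $(j-1,n)$ is explicitly handled; the cost is the extra layer of the characterization lemma. One small point worth making explicit if you write this up: you should note (as you implicitly do via (C3)) that minimality also excludes edges of $P$ or edges parallel to $P$-edges from $F$, so that ``$F$ consists only of covering chords'' is licit before (C1)--(C3) are invoked.
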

\begin{proof}
It is easy to see that if $|F|=1$ then (i) must hold. Assume that $|F| \geq 2$.  
There is $(x,y) \in F$ with $x<j<y$ as otherwise $(P \cup F) \sem \{j\}$ has no $(j-1,n)$-path.
Let $F'= F \sem \{(x,y)\}$.
Then $P \cup F'$ contains $2$ internally disjoint $(y-1,n)$-paths, 
as otherwise $(P \cup F) \sem \{y\}$ has no $(j-1,n)$-path.
Let $x'$ be the lowest end of an edge in $F'$, let $(x',y') \in F'$, and let $F''=F \sem \{(x',y')\}$. 
If $x' < j$ then $F' \in \FF_{i,j}$ (if $y' \geq y$) or $F'' \in \FF_{i,j}$ (if $y' \leq y$), contradicting the minimality of $\FF$.
If $x'\geq y$ then  $(P \cup F) \sem \{y\}$ has no $(j-1,n)$-path.
This implies that $F' \in \FF_{i,j}$, hence (ii) holds. 
\end{proof}

\begin{figure}
\centering 
\includegraphics{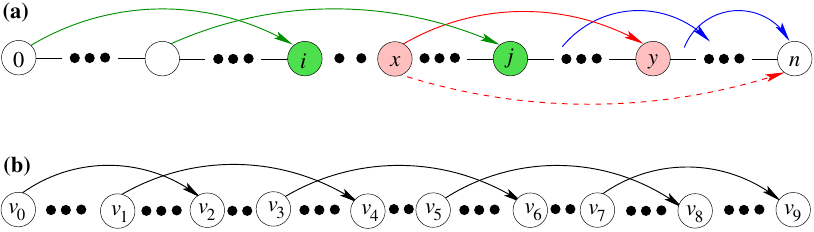}
\caption{Illustration to Lemma~\ref{l:F}. 
Three dots between nodes indicates that these nodes are distinct, while if there are only two dotes then the nodes may coincide.}
\label{f:st2}
\end{figure}

We note that Lemma~\ref{l:F} has the following (essentially known) consequence. 
Consider an inclusion minimal feasible solution $F$ to our problem. 
Then the edges in $F$ have an order
$(v_0,v_2),(v_1,v_4),(v_3,v_6), \ldots, (v_{2q-1},v_{2q+1})$ such that (see Fig.~\ref{f:st2}(b)) 
\[
0=v_0<v_1<v_2 \leq v_3 < v_4 \leq v_5 < \cdots \leq v_{2q-1} < v_{2q}<v_{2q+1}=n \ .
\]
Note that in this node sequence some nodes may be identical (e.g., we may have $v_2=v_3$), 
while the others are required to be distinct (e.g., $v_0<v_1<v_2$).

\medskip

For $(l_i,l_j) \in L_i \times L_j$ 
let $E(l_i,l_j)=\{e \in E: \ell_e(l_i) \leq l_i, \ell_e(l_j) \leq l_j\}$.
Let $0 \leq i <j< n$.
For $F \subs E$ the {\bf $(l_i,l_j)$-forced cost of $F$} is defined by 
\[
\al_F(l_i,l_j)= \left \{ \begin{array}{ll}
c(F)+\ell_F(V \sem \{i,j\})+l_i+l_j      & \ \mbox{if} \ \ F \subs E(l_i,l_j) \\
 \infty                                                   & \mbox{ otherwise}
\end{array} \right .
\]
Namely, assuming $F \subs E(l_i,l_j)$ 
we pay $\ell_F(v)$ at every node $v \in V \sem \{i,j\}$, 
and in addition we ``forcefully'' pay $l_i$ at $i$ and $l_j$ at $j$.
Note that 
\[
\al_F(l_i,l_j)=c(F)+\ell_F(V \sem \{i,j\})+l_i+l_j = \tau(F)+(l_i-\ell_F(i))+(l_j-\ell_F(j))  \ .
\]
This implies the following. 

\begin{corollary} \label{c:al}
$\al_F(l_i,l_j) \geq \tau(F)$ and an equality holds if and only if $\ell_F(i)=l_i$ and $\ell_F(j)=l_j$.
\end{corollary}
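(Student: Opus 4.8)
The plan is to read the statement off directly from the identity displayed immediately before the corollary, namely
\[
\al_F(l_i,l_j)=c(F)+\ell_F(V \sem \{i,j\})+l_i+l_j=\tau(F)+(l_i-\ell_F(i))+(l_j-\ell_F(j)) \ ,
\]
which is valid whenever $F \subs E(l_i,l_j)$ (the case in which $\al_F(l_i,l_j)$ is finite): it is pure algebra, using $\tau(F)=\ell_F(V \sem \{i,j\})+\ell_F(i)+\ell_F(j)+c(F)$. The one point that needs checking is that the two correction terms $l_i-\ell_F(i)$ and $l_j-\ell_F(j)$ are non-negative. For this I would use that $F \subs E(l_i,l_j)$ means every edge $e \in \delta_F(i)$ has activation cost $c_e^i \leq l_i$ at $i$, so $\ell_F(i)=\max_{e \in \delta_F(i)}c_e^i \leq l_i$ (with the convention that a maximum over $\delta_F(i)=\empt$ is $0 \leq l_i$, since activation costs are non-negative); symmetrically $\ell_F(j) \leq l_j$. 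Hence $\al_F(l_i,l_j) \geq \tau(F)$, with equality precisely when both correction terms vanish, i.e.\ when $\ell_F(i)=l_i$ and $\ell_F(j)=l_j$.

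It then remains to treat the case $F \not\subs E(l_i,l_j)$, where by definition $\al_F(l_i,l_j)=\infty \geq \tau(F)$, so the inequality is immediate; and here some edge $e \in F$ is incident to $i$ with $c_e^i>l_i$ or incident to $j$ with $c_e^j>l_j$, so that $\ell_F(i)>l_i$ or $\ell_F(j)>l_j$, and the equality condition ``$\ell_F(i)=l_i$ and $\ell_F(j)=l_j$'' fails --- consistently with $\al_F(l_i,l_j) \neq \tau(F)$. I do not expect any genuine obstacle here: the corollary is a bookkeeping consequence of the definitions of $\al_F$ and $\ell_F$, the only routine verification being the sign of the two correction terms, which in turn is exactly what the membership $F \subs E(l_i,l_j)$ encodes.
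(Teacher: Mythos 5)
Your proof is correct and follows exactly the route the paper intends: the corollary is stated as an immediate consequence of the displayed identity $\al_F(l_i,l_j)=\tau(F)+(l_i-\ell_F(i))+(l_j-\ell_F(j))$, and you supply precisely the two routine checks the paper leaves implicit (non-negativity of the correction terms via $F\subs E(l_i,l_j)$, and the degenerate $\al_F=\infty$ case). Nothing further is needed.
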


Let $f(l_i,l_j)$ denote the minimal $(l_i,l_j)$-forced cost of an edge set $F \in \FF_{i,j}$, namely
\begin{equation} \label{e:f}
f(l_i,l_j) = \min_{F \in \FF_{i,j}} \al_F(l_i,l_j) \ .
\end{equation} 

The number of possible values of $f(l_i,l_j)$ is $O(|V|^2|L|^2)$ - there are $O(|V|^2)$ choices of $i,j$ and at most $|L|^2$ choices of $l_i,l_j$.
We will show how to compute all these values in polynomial time using dynamic programming. 
Specifically, we will get a recursive formula that enables to compute each value either directly, or using previously computed values. 

The next lemma shows how the function $f(l_i,l_j)$ is related to our problem. 

\begin{lemma}
${\sf opt}=\displaystyle{\min_{l_0,l_1 \in L}f(l_0,l_1)}$.
\end{lemma}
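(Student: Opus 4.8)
\textbf{Proof proposal for the Lemma $\mathsf{opt}=\min_{l_0,l_1 \in L}f(l_0,l_1)$.}

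The plan is to prove the two inequalities separately, using Corollary~\ref{c:al} and the definition~\eqref{e:f} as the main tools. First I would establish $\mathsf{opt} \leq \min_{l_0,l_1 \in L}f(l_0,l_1)$. Fix any $l_0 \in L_0$, $l_1 \in L_1$ and let $F \in \FF_{0,1}$ attain $f(l_0,l_1)=\al_F(l_0,l_1)$. The key observation is that $\FF_{0,1}$ is exactly the family of inclusion-minimal edge sets $F \subs E$ such that $P \cup F$ contains $2$ internally disjoint $(0,n)$-paths (condition (i) with $j=1$ gives $j-1=0$, and condition (ii) with $i=0$ is vacuous since every edge-end is $\geq 0$). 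Hence $F$ is a feasible solution to {\sc 3-Cost Hamiltonian} {\atdpa}, so $\tau(F) \geq \mathsf{opt}$, and by Corollary~\ref{c:al}, $f(l_0,l_1)=\al_F(l_0,l_1) \geq \tau(F) \geq \mathsf{opt}$. Taking the minimum over $l_0,l_1$ preserves the inequality.

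For the reverse inequality $\mathsf{opt} \geq \min_{l_0,l_1 \in L}f(l_0,l_1)$, I would take an optimal inclusion-minimal solution $F^*$ with $\tau(F^*)=\mathsf{opt}$. Again $F^*$ is an inclusion-minimal edge set such that $P \cup F^*$ contains $2$ internally disjoint $(0,n)$-paths, and no edge-end is strictly less than $0$, so $F^* \in \FF_{0,1}$. Set $l_0 = \ell_{F^*}(0)$ and $l_1 = \ell_{F^*}(1)$; these lie in $L_0$ and $L_1$ respectively (if $\delta_{F^*}(0)=\empt$ then $\ell_{F^*}(0)=0 \in L_0$ by the convention on empty maxima, and similarly at $1$ — or one simply notes $0 \in L_v$ for all $v$). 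Since $F^* \subs E(l_0,l_1)$ holds automatically (every edge $e \in \delta_{F^*}(0)$ has $c_e^0 \leq \ell_{F^*}(0)=l_0$, and likewise at $1$, while edges incident to neither $0$ nor $1$ impose no constraint), Corollary~\ref{c:al} gives $\al_{F^*}(l_0,l_1)=\tau(F^*)=\mathsf{opt}$. Therefore $f(l_0,l_1) \leq \al_{F^*}(l_0,l_1)=\mathsf{opt}$, and in particular $\min_{l_0,l_1 \in L}f(l_0,l_1) \leq \mathsf{opt}$.

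Combining the two inequalities yields the claimed equality. The only mildly delicate point is the bookkeeping around the boundary case $i=0$, $j=1$: one must check that $\FF_{0,1}$ really coincides with the set of all inclusion-minimal feasible solutions to the full problem, i.e. that requiring $2$ internally disjoint $(j-1,n)=(0,n)$-paths in $P \cup F$ is the feasibility condition and that condition (ii) is non-restrictive; and that the ``forced'' levels $l_0,l_1$ chosen from an optimal solution genuinely belong to $L_0,L_1$. Neither presents a real obstacle — they are immediate from the definitions and the convention that a maximum over the empty set is zero, which is why $0 \in L_v$ for every $v$ — but they are the steps worth spelling out explicitly so the reduction to the dynamic program over the $f(l_i,l_j)$ values is airtight.
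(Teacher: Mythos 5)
Your proof is correct and follows essentially the same route as the paper's: both directions are obtained by combining Corollary~\ref{c:al} with the observation that $\FF_{0,1}$ is exactly the family of inclusion-minimal feasible solutions, taking a minimizer of $f(l_0,l_1)$ for one inequality and plugging in $l_0=\ell_{F^*}(0)$, $l_1=\ell_{F^*}(1)$ for an optimal $F^*$ for the other. The extra bookkeeping you spell out (why condition (ii) is vacuous for $i=0$, why the chosen levels are admissible, and why $F^*\subs E(l_0,l_1)$) is left implicit in the paper but is consistent with its argument.
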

\begin{proof}
From the definition of $\FF_{i,j}$ it follows that that  $\FF_{0,1}$ is the set of all inclusion minimal feasible solutions.
Let $F^*$ be an inclusion minimal optimal solution and let $\hat F$ be the minimizer of (\ref{e:f}).
Then for any $l_0,l_1$ we have
\[
f(l_0,l_1)=\al_{\hat F}(l_0,l_1) \geq \tau(\hat F) \geq \tau(F^*) = {\sf opt} \ .
\]
Consequently, $\displaystyle{\min_{l_0,l_1 \in L}f(l_0,l_1)} \geq {\sf opt}$.
On the other hand for $l_0=\ell_{F^*}(0)$ and  $l_1=\ell_{F^*}(1)$ we have 
\[
f(l_0,l_1)=\al_{F^*}(l_0,l_1)=\tau(F^*)={\sf opt} \ ,
\] 
by Corollary~\ref{c:al}. This implies $\displaystyle{\min_{l_0,l_1 \in L}f(l_0,l_1)} \leq {\sf opt}$, concluding the proof.
\end{proof}

When $F=\{e\}$ is a single edge we will use the abbreviated notation $\al_e(l_i,l_j)=\al_{\{e\}}(l_i,l_j)$. 
For  $(l_i,l_y) \in L_i \times L_y$ and an edge $e=(x,y) \in E(l_i,l_y)$ with $i \leq x <y$ let $\be_e(l_i,l_y)$ be defined by
\[
\be_e(l_i,l_y)= \left \{ \begin{array}{ll}
0                     & \mbox{if} \ x=i \\
 \ell_e(x)        & \mbox{if} \ x>i
\end{array} \right .
\]
We now define two functions that reflect the two different scenarios in Lemma~\ref{l:F}. 
\begin{eqnarray*}
g(l_i,l_j) & = & \min_e\{\al_e(l_i,l_j) : e=(x,n) \in E, i \leq x <j\} \\
h(l_i,l_j) & = & \min_{l_y,e}\{c(e)+l_i+\be_e(l_i,l_y)+f(l_j,l_y):  e=(x,y) \in E(l_i,l_y), i \leq x <j<y\}
\end{eqnarray*}

It is not hard to see that the function $g(l_i,l_j)$ is the minimal $(l_i,l_j)$-forced cost of a single edge set $e=(x,n)$ such that 
$\{e\} \in \FF_{i,j}$.
Thus if there exist a minimizer of (\ref{e:f}) that is a single edge, then $f(l_i,l_j)=g(l_i,l_j)$. 

We will show that the function $h(l_i,l_j)$ is the minimal $(l_i,l_j)$-forced cost of a non-singleton set $F \in \FF_{i,j}$;
note that by lemma~\ref{l:F} any such $F$ is a union of some single edge $(x,y) \in F$ with $i \leq x<j<y < n$ and $F' \in \FF_{j,y}$.
We need the following lemma.

\begin{lemma} \label{l:e}
Let $F \in \FF_{i,j}$ such that $F \subs E(l_i,l_j)$ and $|F| \geq 2$.
Let $(x,y) \in F$ be the first edge of $F$ as in Lemma~\ref{l:F}, let $F'=F \sem \{e\}$, and let $l_y=\ell_F(y)$. 
Then 
\[
\al_F(l_i,l_j) =c(e)+l_i+\be_e(l_i,l_y)+\al_{F'}(l_j,l_y)
\]
\end{lemma}
\begin{proof}
One can verify that for $l_y=\ell_F(y)$ we have $F' \subs E(l_j,l_y)$ and the following holds:
\[
 \ell_F(V \sem \{i,j\}) = \be_e(l_i,l_y)+\ell_{F'}(V \sem \{j,y\})+l_y  \ .
\]
From this and using that $c(F)=c(e)+c(F')$ we get 
\begin{eqnarray*}
\al_F(l_i,l_j) & = & c(F)+\ell_F(V \sem \{i,j\})+l_i+l_j  \\
                     & = & c(e)+l_i+\be_e(l_i,l_y)+c(F')+\ell_{F'}(V \sem \{j,y\})+l_y+l_j  \\
										 & = & c(e)+l_i+\be_e(l_i,l_y)+\al_{F'}(l_j,l_y) \ , 
\end{eqnarray*}
as required.
\end{proof}


\begin{lemma} \label{l:F*}
Among all non-singleton sets in $\FF_{i,j}$ let $F^*$ have minimal $(l_i,l_j)$-forced cost.
Then $h(l_i,l_j)=\al_{F^*}(l_i,l_j)$.
\end{lemma}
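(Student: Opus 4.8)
The plan is to prove the two inequalities $h(l_i,l_j) \le \al_{F^*}(l_i,l_j)$ and $h(l_i,l_j) \ge \al_{F^*}(l_i,l_j)$ separately, using Lemma~\ref{l:e} together with the defining recursive structure of $\FF_{i,j}$ from Lemma~\ref{l:F}.

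For the inequality $h(l_i,l_j) \le \al_{F^*}(l_i,l_j)$: since $F^*$ is a non-singleton set in $\FF_{i,j}$ with $\al_{F^*}(l_i,l_j) < \infty$, we have $F^* \subs E(l_i,l_j)$, and by Lemma~\ref{l:F} we may write $F^* = \{e\} \cup F'$ with $e=(x,y)$, $i \le x < j < y < n$, and $F' \in \FF_{j,y}$. Setting $l_y = \ell_{F^*}(y)$, Lemma~\ref{l:e} gives $\al_{F^*}(l_i,l_j) = c(e)+l_i+\be_e(l_i,l_y)+\al_{F'}(l_j,l_y)$. Now $\al_{F'}(l_j,l_y) \ge f(l_j,l_y)$ by the definition~(\ref{e:f}) of $f$ as the minimum over $\FF_{j,y}$; also $e \in E(l_i,l_y)$ (this is part of what must be verified, and it follows from $F^* \subs E(l_i,l_j)$ and the choice $l_y = \ell_{F^*}(y)$, so $c_e^y \le l_y$, while $c_e^i \le l_i$ since $x=i$ case is trivial and for $x>i$ the forced-cost finiteness handles the $i$-coordinate). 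Hence the triple $(l_y, e)$ is a legal choice in the minimization defining $h(l_i,l_j)$, so $h(l_i,l_j) \le c(e)+l_i+\be_e(l_i,l_y)+f(l_j,l_y) \le \al_{F^*}(l_i,l_j)$.

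For the reverse inequality $h(l_i,l_j) \ge \al_{F^*}(l_i,l_j)$: take a minimizing choice $(l_y, e)$ with $e=(x,y) \in E(l_i,l_y)$, $i \le x < j < y$, achieving the value $h(l_i,l_j) = c(e)+l_i+\be_e(l_i,l_y)+f(l_j,l_y)$. Let $\hat F'$ be a minimizer of~(\ref{e:f}) for $f(l_j,l_y)$, so $\hat F' \in \FF_{j,y}$ and $\al_{\hat F'}(l_j,l_y) = f(l_j,l_y)$; in particular this is finite, so $\hat F' \subs E(l_j,l_y)$. Set $F = \{e\} \cup \hat F'$. By Lemma~\ref{l:F}(ii), $F \in \FF_{i,j}$, and it is non-singleton. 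I then need $F \subs E(l_i,l_j)$: the edge $e$ lies in $E(l_i,l_y)$, and since all edges of $\hat F'$ have ends $\ge j$ they do not touch $i$; the condition at $j$ needs a small argument but follows because $\hat F' \subs E(l_j,l_y)$ controls the $j$-coordinate and $e$ has $x < j$ so $e$ does not constrain level at $j$ either — more precisely one checks $\ell_e(j) = 0$ unless $y=j$, which is excluded. The subtle point is reconciling $\al_{\hat F'}(l_j,l_y)$ with the quantity appearing in Lemma~\ref{l:e}: Lemma~\ref{l:e} was stated with $l_y = \ell_F(y)$, whereas here $l_y$ is an arbitrary level. I would either invoke a monotonicity/optimality observation (replacing $l_y$ by $\ell_F(y)$ only decreases the forced cost and keeps feasibility) or, cleaner, verify directly the identity $\al_F(l_i,l_j) = c(e)+l_i+\be_e(l_i,l_y)+\al_{\hat F'}(l_j,l_y)$ whenever $l_y \ge \ell_F(y)$, using $\max$ is attained; then $\al_F(l_i,l_j) \le c(e)+l_i+\be_e(l_i,l_y)+f(l_j,l_y) = h(l_i,l_j)$. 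Since $F$ is a non-singleton member of $\FF_{i,j}$, minimality of $F^*$ gives $\al_{F^*}(l_i,l_j) \le \al_F(l_i,l_j) \le h(l_i,l_j)$.

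The main obstacle I anticipate is the bookkeeping around the parameter $l_y$: in the "$\ge$" direction $l_y$ comes from the minimization in $h$ and need not equal $\ell_F(y)$ for the reconstructed $F$, so the clean identity of Lemma~\ref{l:e} does not apply verbatim and I must argue that using a possibly-too-large $l_y$ still upper-bounds $\al_F(l_i,l_j)$ — essentially because $\ell_F(y) = \max(\ell_{\hat F'}(y), c_e^y)$ and both terms are $\le l_y$, so the "forced" payment at $y$ inside $\al_{\hat F'}(l_j,l_y)$ correctly covers $\ell_F(y)$. Once that is pinned down, both inequalities fall out of Lemma~\ref{l:e} and the definitions, and the proof is short.
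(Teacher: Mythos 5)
Your proof is correct and follows essentially the same two-inequality argument as the paper: decompose $F^*$ via Lemma~\ref{l:F} and apply Lemma~\ref{l:e} with $l_y=\ell_{F^*}(y)$ for one direction, and reassemble $F=\{e\}\cup \hat F'$ from a minimizer of $h$ for the other. In fact you are slightly more careful than the paper in the second direction, where the paper asserts $l_y=\ell_F(y)$ ``by the definition of $h$'' while your observation that $\ell_F(y)=\max(\ell_{\hat F'}(y),c_e^y)\le l_y$ suffices for the needed inequality $\al_F(l_i,l_j)\le c(e)+l_i+\be_e(l_i,l_y)+f(l_j,l_y)$ is the cleaner way to close that step.
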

\begin{proof}
We show that $\al_{F^*}(l_i,l_j) \geq h(l_i,l_j)$. 
Let $e=(x,y) \in F^*$ be the first edge of $F^*$ as in Lemma~\ref{l:F}(ii), let $F'=F \sem \{e\}$, and let $l_y=\ell_{F^*}(y)$. 
Then $e \in E(l_i,l_y)$, hence by Lemma~\ref{l:e}
\[
\al_{F^*}(l_i,l_j) = c(e)+l_i+\be_e(l_i,l_y)+\al_{F'}(l_j,l_y) \geq h(l_i,l_j) \ .
\]
The inequality is since in the definition of $h(l_i,l_j)$ we minimize over $e=(x,y)$ and $l_y$. 

We show that $\al_{F^*}(l_i,l_j) \leq h(l_i,l_j)$. 
Let $e=(x,y) \in E$ and $l_y$ be the parameters for which 
the minimum in the definition of $h(l_i,l_j)$ is attained, 
and let $F' \in \FF_{j,y}$ such that $f(l_j,l_y) = \al_{F'}(l_j,l_y)$.
Let $F=F' \cup \{e\}$ and note that $F \in \FF_{i,j}$ (by Lemma~\ref{l:F}) and that $l_y=\ell_F(y)$ (by the definition of $h(l_i,l_j)$). 
Consequently, 
\[
 \al_{F^*}(l_i,l_j) \leq  \al_F(l_i,l_j) = c(e)+l_i+\be_e(l_i,l_y)+f(l_j,l_y) = h(l_i,l_j) 
\]
The inequality is since $F^*$ has minimal $(l_i,l_j)$-forced cost. 

We showed that $\al_{F^*}(l_i,l_j) \geq h(l_i,l_j)$ and that $\al_{F^*}(l_i,l_j) \leq h(l_i,l_j)$,
hence the proof is complete. 
\end{proof}

Let $F^*$ be the minimizer of (\ref{e:f}). From Lemma~\ref{l:F*} we have:
\begin{itemize}
\item
If $|F^*|=1$ then $f(l_i,l_j)=g(l_i,l_j)$.
\item
If $|F^*|\geq 2$ then $f(l_i,l_j)=h(l_i,l_j)$.
\end{itemize}
Therefore
\begin{equation} \label{e:rec}
f(l_i,l_j)=\min\{g(l_i,l_j),h(l_i,l_j)\}
\end{equation}
Note that the quantities $g(l_i,l_j)$ can be computed directly in polynomial time. 
The recurrence in (\ref{e:rec}) enables to compute values of $f(l_i,l_j)$, 
for all $0 \leq i< j \leq n-1$ and $(l_i,l_j) \in L_i \times L_j$, in polynomial time.  
The number of such values is $O(n^2 |L|^2)$, concluding the proof of Theorem~\ref{t1}. 

\medskip

Let us illustrate the recursion by showing how the values of $f(l_i,l_j)$ are computed for $j=n-1,n-2$. 
Recall that the values of the function $g$ are computed directly, without recursion and that
\[
h(l_i,l_j) =  \min_{l_y,e}\{c(e)+l_i+\be_e(l_i,l_y)+f(l_j,l_y):  e=(x,y) \in E(l_i,l_y), i \leq x <j<y\}
\]

For $j=n-1$ we have $h(l_i,l_j)=\infty$, and thus:
\[
f(l_i,l_{n-1}) =g(l_i,l_{n-1})
\]

\medskip

For $j=n-2$, the only possible value of $y$ is $y= n-1$.
For every $i \leq x <y=n-1$ and $e=xy \in E$ we compute directly (without recursion) the values $\be_e(l_i,l_y)$.
Then
\begin{eqnarray*} \displaystyle 
h(l_i,l_{n-2}) & = & \min_{l_y,e}\{c(e)+l_i+\be_e(l_i,l_y)+f(l_j,l_y): e=(x,y) \in E(l_i,l_y), i \leq x <j<y\} \\
                      & = & \min_{l_{n-1},e} \left\{ c(e)+l_i+\be_e(l_i,l_{n-1})+f(l_{n-2},l_{n-1}): \begin{array}{ll}
                                                                                                                                               e=(x,n-1) \in E(l_i,l_{n-1}) \\
                                                                                                                                                i \leq x <j<n-1
																																																																							  \end{array} \right\}
\end{eqnarray*}
Substituting the already computed value $f(l_{n-2},l_{n-1}) =g(l_{n-2},l_{n-1})$ enables to compute the minimum
of the obtained expression, and thus also to compute $f(l_i, l_n-2)$ via (\ref{e:rec}). 

In a similar way we can compute $h(l_i, l_{n-3})$, then $f(l_i,l_{n-3})$, and so on. 


\end{document}